\newtheorem{corollary}{Corollary}
\newtheorem{lemma}{Lemma}
\newtheorem{proposition}{Proposition}
\newenvironment{proof}[1][Proof]{\noindent\textbf{#1.} }{\ \rule{0.5em}{0.5em}}
\title{Improving the Asymptotic Performance of Markov Chain Monte-Carlo by Inserting Vortices}
\author{
Yi Sun \\
IDSIA\\
Galleria 2, Manno CH-6928, Switzerland\\
\texttt{yi@idsia.ch} \\
\And
Faustino Gomez \\
IDSIA \\
Galleria 2, Manno CH-6928, Switzerland\\
\texttt{tino@idsia.ch} \\
\And
J\"urgen Schmidhuber \\
IDSIA \\
Galleria 2, Manno CH-6928, Switzerland\\
\texttt{juergen@idsia.ch} \\
}
\begin{document}

\maketitle

\begin{abstract}


We present a new way of converting a reversible finite Markov chain into a
non-reversible one, with a theoretical guarantee that the asymptotic variance
of the MCMC estimator based on the non-reversible chain is reduced.
The method is applicable to any reversible chain whose states are not connected
through a tree, and can be interpreted graphically as inserting vortices into
the state transition graph.
Our result confirms that non-reversible chains are fundamentally better
than reversible ones in terms of asymptotic performance, and suggests interesting
directions for further improving MCMC.

\end{abstract}

\section{Introduction}

Markov Chain Monte Carlo (MCMC) methods have gained enormous popularity over a
wide variety of research fields \cite{Neal 96,Andrieu 03}, owing to their
ability to compute expectations with respect to complex, high dimensional
probability distributions.
An MCMC estimator can be based on any ergodic Markov chain with the
distribution of interest as its stationary distribution. However, the choice
of Markov chain greatly affects the performance of the estimator, in
particular the accuracy achieved with a pre-specified number of samples
\cite{Neal 04}.

In general, the efficiency of an MCMC estimator is determined by two factors:
i) how fast the chain converges to its stationary distribution, i.e., the
mixing rate \cite{Szakdolgozat 06}, and ii) once the chain reaches its
stationary distribution, how much the estimates fluctuate based on
trajectories of finite length, which is characterized by the asymptotic
variance.
In this paper, we consider the latter criteria. Previous theory concerned with
reducing asymptotic variance has followed two main tracks. The first focuses
on reversible chains, and is mostly based on the theorems of Peskun
\cite{Peskun 73} and Tierney \cite{Tierney 98}, which state that if a
reversible Markov chain is modified so that the probability of staying in the
same state is reduced, then the asymptotic variance can be decreased. A number
of methods have been proposed, particularly in the context of
Metropolis-Hastings method, to encourage the Markov chain to move away from
the current state, or its adjacency in the continuous case \cite{Duane 87,Liu
96}. The second track, which was explored just recently, studies
non-reversible chains. Neal proved in \cite{Neal 04} that starting from any
finite-state reversible chain, the asymptotic variance of a related
non-reversible chain, with reduced probability of back-tracking to the
immediately previous state, will not increase, and typically decrease. Several
methods have been proposed by Murray based on this idea \cite{Murray 07}.

Neal's result suggests that non-reversible chains may be fundamentally better
than reversible ones in terms of the asymptotic performance. In this paper, we
follow up this idea by proposing a new way of converting reversible chains
into non-reversible ones which, unlike in Neal's method, are defined on the
state space of the reversible chain, with the theoretical guarantee that the
asymptotic variance of the associated MCMC estimator is reduced. Our method is
applicable to any non-reversible chain whose state transition graph contains
loops, including those whose probability of staying in the same state is zero
and thus cannot be improved using Peskun's theorem. The method also admits an
interesting graphical interpretation which amounts to inserting `vortices'
into the state transition graph of the original chain. Our result suggests a
new and interesting direction for improving the asymptotic performance of MCMC.

The rest of the paper is organized as follows: section 2 reviews some
background concepts and results; section 3 presents the main theoretical
results, together with the graphical interpretation; section 4 provides
a simple yet illustrative example and explains the intuition behind the
results; section 5 concludes the paper.

\section{Preliminaries}

Suppose we wish to estimate the expectation of some real valued function $f$
over domain $\mathcal{S}$, with respect to a probability distribution $\pi$,
whose value may only be known to a multiplicative constant. Let $A$ be a
transition operator of an ergodic\footnote{Strictly speaking, the ergodic
assumption is not necessary for the MCMC\ estimator to work, see \cite{Neal
04}. However, we make the assumption to simplify the analysis.} Markov chain
with stationary distribution $\pi$, i.e.,%
\begin{equation}
\pi\left(  x\right)  A\left(  x\rightarrow y\right)  =\pi\left(  y\right)
B\left(  y\rightarrow x\right)  \text{, }\forall x,y\in\mathcal{S}\text{,}
\label{eq-10}%
\end{equation}
where $B$ is the reverse operator as defined in~\cite{Murray 07}. The
expectation can then be estimated through the MCMC estimator%
\begin{equation}
\mu_{T}=\frac{1}{T}\sum\nolimits_{t=1}^{T}f\left(  x_{t}\right)  \text{,}
\label{eq-20}%
\end{equation}
where $x_{1},\cdots,x_{T}$ is a trajectory sampled from the Markov chain. The
asymptotic variance of $\mu_{T}$, with respect to transition operator $A$ and
function $f$ is defined as%
\begin{equation}
\sigma_{A}^{2}\left(  f\right)  =\lim_{T\rightarrow\infty}T\mathbb{V}\left[
\mu_{T}\right]  \text{,} \label{eq-30}%
\end{equation}
where $\mathbb{V}\left[  \mu_{T}\right]  $ denotes the variance of $\mu_{T}$.
Since the chain is ergodic, $\sigma_{A}^{2}\left(  f\right)  $ is well-defined
following the central limit theorem, and does not depend on the distribution
of the initial point. Roughly speaking, asymptotic variance has the meaning
that the mean square error of the estimates based on $T$ consecutive states of
the chain would be approximately $\frac{1}{T}\sigma_{A}^{2}\left(  f\right)
$, after a sufficiently long period of "burn in" such that the chain is close
enough to its stationary distribution. Asymptotic variance can be used to
compare the asymptotic performance of MCMC estimators based on different
chains with the same stationary distribution, where smaller asymptotic
variance indicates that, asymptotically, the MCMC estimator requires fewer
samples to reach a specified accuracy.

Under the ergodic assumption, the asymptotic variance can be written as%
\begin{equation}
\sigma_{A}^{2}\left(  f\right)  =\mathbb{V}\left[  f\right]  +\sum
\nolimits_{\tau=1}^{\infty}\left(  c_{A,f}\left(  \tau\right)  +c_{B,f}\left(
\tau\right)  \right)  \text{,} \label{eq-40}%
\end{equation}
where
\[
c_{A,f}\left(  \tau\right)  =\mathbb{E}_{A}\left[  f\left(  x_{t}\right)
f\left(  x_{t+\tau}\right)  \right]  -\mathbb{E}_{A}\left[  f\left(
x_{t}\right)  \right]  \mathbb{E}\left[  f\left(  x_{t+\tau}\right)  \right]
\]
is the covariance of the function value between two states that are $\tau$
time steps apart in the trajectory of the Markov chain with transition
operator $A$. Note that $\sigma_{A}^{2}\left(  f\right)  $ depends on both $A$
and its reverse operator $B$, and $\sigma_{A}^{2}\left(  f\right)  =\sigma
_{B}^{2}\left(  f\right)  $ since $A$ is also the reverse operator of $B$ by definition.

In this paper, we consider only the case where $\mathcal{S}$ is finite, i.e.,
$\mathcal{S}=\left\{  1,\cdots,S\right\}  $, so that the transition operators
$A$ and $B$, the stationary distribution $\pi$, and the function $f$ can all
be written in matrix form. Let $\pi=\left[  \pi\left(  1\right)  ,\cdots
,\pi\left(  S\right)  \right]  ^{\top}$, $f=\left[  f\left(  1\right)
,\cdots,f\left(  S\right)  \right]  ^{\top}$, $A_{i,j}=A\left(  i\rightarrow
j\right)  $, $B_{i,j}=B\left(  i\rightarrow j\right)  $. The asymptotic
variance can thus be written as%
\[
\sigma_{A}^{2}\left(  f\right)  =\mathbb{V}\left[  f\right]  +\sum
\nolimits_{\tau=1}^{\infty}f^{\top}\left(  QA^{\tau}+QB^{\tau}-2\pi\pi^{\top
}\right)  f\text{,}%
\]
with $Q=\operatorname*{diag}\left\{  \pi\right\}  $. Since $B$ is the reverse
operator of $A$, $QA=B^{\top}Q$. Also, from the ergodic assumption,%
\[
\lim_{\tau\rightarrow\infty}A^{\tau}=\lim_{\tau\rightarrow\infty}B^{\tau
}=R\text{,}%
\]
where $R=\mathbf{1}\pi^{\top}$ is a square matrix in which every row is
$\pi^{\top}$. It follows that the asymptotic variance can be represented by
Kenney's formula \cite{Kenney 63} in the non-reversible case:%
\begin{equation}
\sigma_{A}^{2}\left(  f\right)  =\mathbb{V}\left[  f\right]  +2\left(
Qf\right)  ^{\top}\left[  \Lambda^{-}\right]  _{H}\left(  Qf\right)
-2f^{\top}Qf\text{,} \label{eq-50}%
\end{equation}
where $\left[  \cdot\right]  _{H}$ denotes the Hermitian (symmetric) part of a
matrix, and $\Lambda=Q+\pi\pi^{\top}-J$, with $J=QA$ being the joint
distribution of two consecutive states.

\section{Improving the asymptotic variance}

It is clear from Eq.\ref{eq-50} that the transition operator $A$ affects the
asymptotic variance only through term $\left[  \Lambda^{-}\right]  _{H}$. If
the chain is reversible, then $J$ is symmetric, so that $\Lambda$ is also
symmetric, and therefore comparing the asymptotic variance of two
MCMC\ estimators becomes a matter of comparing their $J$, namely,
if\footnote{For symmetric matrices $X$ and $Y$, we write $X\preceq Y$ if $Y-X$
is positive semi-definite, and $X\prec Y$ if $Y-X$ is positive definite.}
$J\preceq J^{\prime}=QA^{\prime}$, then $\sigma_{A}^{2}\left(  f\right)
\leq\sigma_{A^{\prime}}^{2}\left(  f\right)  $, for any $f$. This leads to a
simple proof of Peskun's theorem in the discrete case \cite{Li 05}.

In the case where the Markov chain is non-reversible, i.e., $J$ is asymmetric,
the analysis becomes much more complicated. We start by providing a sufficient
and necessary condition in section \ref{general}, which transforms the
comparison of asymptotic variance based on arbitrary finite Markov chains into
a matrix ordering problem, using a result from matrix analysis. In section
\ref{special-case}, a special case is identified, in which the asymptotic
variance of a reversible chain is compared to that of a non-reversible one
whose joint distribution over consecutive states is that of the reversible
chain plus a skew-Hermitian matrix. We prove that the resulting non-reversible
chain has smaller asymptotic variance, and provide a necessary and sufficient
condition for the existence of such non-zero skew-Hermitian matrices. Finally
in section \ref{graphical}, we provide a graphical interpretation of the result.

\subsection{The general case\label{general}}

From Eq.\ref{eq-50} we know that comparing the asymptotic variances of two
MCMC estimators is equivalent to comparing their $\left[  \Lambda^{-}\right]
_{H}$. The following result from \cite{Wen 05,Mathias 92} allows us to write
$\left[  \Lambda^{-}\right]  _{H}$ in terms of the symmetric and asymmetric
parts of $\Lambda$.

\begin{lemma}
\label{lemma-1}If a matrix $X$ is invertible, then $\left[  X^{-}\right]
_{H}^{-}=\left[  X\right]  _{H}+\left[  X\right]  _{S}^{\top}\left[  X\right]
_{H}^{-}\left[  X\right]  _{S}$, where $\left[  X\right]  _{S}$ is the skew
Hermitian part of $X$.
\end{lemma}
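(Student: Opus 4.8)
The plan is to decompose $X$ into its Hermitian part $H := [X]_H$ and skew-Hermitian part $S := [X]_S$, so $X = H + S$, and then compute the Hermitian part of $X^{-1}$ directly. Writing $[X^{-1}]_H = \tfrac12(X^{-1} + (X^{-1})^{\top})$ and using that $(X^{\top})^{-1} = (X^{-1})^{\top}$, this equals $\tfrac12\bigl((H+S)^{-1} + (H-S)^{-1}\bigr)$, since $X^{\top} = H - S$. The goal is then to show that the inverse of this average equals $H + S^{\top} H^{-1} S = H - S H^{-1} S$ (using $S^{\top} = -S$). So everything reduces to verifying the matrix identity $\bigl(\tfrac12((H+S)^{-1} + (H-S)^{-1})\bigr)^{-1} = H - S H^{-1} S$, which I expect to handle by factoring: $\tfrac12((H+S)^{-1} + (H-S)^{-1}) = \tfrac12 (H+S)^{-1}\bigl((H-S) + (H+S)\bigr)(H-S)^{-1} = (H+S)^{-1} H (H-S)^{-1}$. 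Inverting this product gives $(H-S) H^{-1} (H+S) = H - S H^{-1} H + H H^{-1} S - S H^{-1} S = H - S + S - S H^{-1} S = H - S H^{-1} S$, which is exactly $[X]_H + [X]_S^{\top} [X]_H^{-1} [X]_S$.

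First I would record that $X$ invertible does not by itself guarantee $H$ invertible, so I would either add that hypothesis or observe that in our application $\Lambda$ has Hermitian part $Q + \pi\pi^{\top} - [J]_H$ which is positive definite (this is implicitly needed for $[\Lambda^{-}]_H^{-1}$ to make sense on the left-hand side anyway), and note that $H \pm S$ being invertible follows from $X$ and $X^{\top}$ being invertible. Then I would carry out the two-line factorization above, being careful that all the intermediate inverses $(H+S)^{-1}$, $(H-S)^{-1}$, $H^{-1}$ exist so that the algebraic manipulations (in particular reversing a product of three invertible matrices) are legitimate.

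The main obstacle is not the algebra, which is short, but the invertibility bookkeeping: one must ensure $[X]_H$ is invertible before the expression $[X]_H^{-1}$ on the right-hand side is even meaningful, and ensure $[X]_H + [X]_S$ and $[X]_H - [X]_S$ are invertible to run the factorization. The cleanest route is to state the lemma under the standing assumption that $[X]_H$ is nonsingular (which is the only case used later, since the asymptotic variance formula already presupposes $[\Lambda^{-}]_H$ is invertible), and then note that $[X]_H \pm [X]_S = X$ or $X^{\top}$ is automatically invertible. With that in hand the identity follows from the displayed factorization by inverting both sides.
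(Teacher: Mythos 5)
Your proof is correct, and the algebra checks out: the factorization $\tfrac12\left((H+S)^{-1}+(H-S)^{-1}\right)=(H+S)^{-1}H(H-S)^{-1}$ and its inversion to $(H-S)H^{-1}(H+S)=H-SH^{-1}S=H+S^{\top}H^{-1}S$ give exactly the claimed identity. Note, however, that the paper does not prove this lemma at all --- it is quoted from the references [Wen 05, Mathias 92] and used as a black box --- so there is no in-paper argument to compare against; your derivation supplies a short, self-contained proof that the paper omits. Your side remark about hypotheses is also a genuine catch rather than a defect of your own argument: invertibility of $X$ does not imply invertibility of $[X]_{H}$ (e.g.\ $X$ skew-symmetric and invertible has $[X]_{H}=0$), so the lemma as stated is slightly too weak for its right-hand side to be meaningful; the correct reading is to assume $[X]_{H}$ nonsingular, after which $H\pm S$ equal $X$ and $X^{\top}$ and are automatically invertible, and the left-hand side $[X^{-}]_{H}$ is invertible because it equals $(H+S)^{-1}H(H-S)^{-1}$. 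In the paper's application this is harmless since $[\Lambda]_{H}\succ0$ is established in the proof of Proposition \ref{prop-1}.
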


From Lemma \ref{lemma-1}, it follows immediately that in the discrete case,
the comparison of MCMC estimators based on two Markov chains with the same
stationary distribution can be cast as a different problem of matrix
comparison, as stated in the following proposition.

\begin{proposition}
\label{prop-1}Let $A$, $A^{\prime}$ be two transition operators of ergodic
Markov chains with stationary distribution $\pi$. Let $J=QA$, $J^{\prime
}=QA^{\prime}$, $\Lambda=Q+\pi\pi^{\top}-J$, $\Lambda^{\prime}=Q+\pi\pi^{\top
}-J^{\prime}$. Then the following three conditions are equivalent:

\begin{description}
\item[1)] $\sigma_{A}^{2}\left(  f\right)  \leq\sigma_{A^{\prime}}^{2}\left(
f\right)  $ for any $f$

\item[2)] $\left[  \Lambda^{-}\right]  _{H}\preceq\left[  \left(
\Lambda^{\prime}\right)  ^{-}\right]  _{H}$

\item[3)] $\left[  J\right]  _{H}-\left[  J\right]  _{S}^{\top}\left[
\Lambda\right]  _{H}^{-}\left[  J\right]  _{S}\preceq\left[  J^{\prime
}\right]  _{H}-\left[  J^{\prime}\right]  _{S}^{\top}\left[  \Lambda^{\prime
}\right]  _{H}^{-}\left[  J^{\prime}\right]  _{S}$
\end{description}
\end{proposition}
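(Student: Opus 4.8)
The plan is to establish the chain of equivalences $1)\Leftrightarrow 2)\Leftrightarrow 3)$, routing everything through Kenney's formula (Eq.~\ref{eq-50}) and Lemma~\ref{lemma-1}. For $1)\Leftrightarrow 2)$, I would read off from Eq.~\ref{eq-50} that
\[
\sigma_{A^{\prime}}^{2}\left(  f\right)  -\sigma_{A}^{2}\left(  f\right)  =2\left(  Qf\right)  ^{\top}\left(  \left[  \left(  \Lambda^{\prime}\right)  ^{-}\right]  _{H}-\left[  \Lambda^{-}\right]  _{H}\right)  \left(  Qf\right)  ,
\]
since the $\mathbb{V}\left[  f\right]  $ and $-2f^{\top}Qf$ terms are identical for $A$ and $A^{\prime}$ (they depend only on $\pi$ and $f$). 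Condition $1)$ says this quantity is nonnegative for every $f$; because $Q=\operatorname*{diag}\left\{  \pi\right\}  $ is invertible (ergodicity forces $\pi\left(  x\right)  >0$ for all $x$), the vector $Qf$ ranges over all of $\mathbb{R}^{S}$ as $f$ does, so nonnegativity for all $f$ is exactly the statement that $\left[  \left(  \Lambda^{\prime}\right)  ^{-}\right]  _{H}-\left[  \Lambda^{-}\right]  _{H}\succeq 0$, which is condition $2)$. One small point to check here is that the quadratic form only sees the symmetric part of the matrix in the middle, but that matrix is already Hermitian by the $\left[  \cdot\right]  _{H}$ operation, so there is nothing lost.

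For $2)\Leftrightarrow 3)$, I would apply Lemma~\ref{lemma-1} with $X=\Lambda$ (and separately $X=\Lambda^{\prime}$). First I need $\Lambda$ invertible: this should follow from the ergodic assumption together with the structure $\Lambda=Q+\pi\pi^{\top}-J$; concretely, one argues that $\Lambda^{-}$ appears in the well-defined Kenney formula, or directly that $\Lambda$ is invertible because its Hermitian part is (the reversible-chain fact that $\left[  \Lambda\right]  _{H}\succ 0$ on the relevant subspace, extended to the general case). Given invertibility, Lemma~\ref{lemma-1} gives $\left[  \Lambda^{-}\right]  _{H}^{-}=\left[  \Lambda\right]  _{H}+\left[  \Lambda\right]  _{S}^{\top}\left[  \Lambda\right]  _{H}^{-}\left[  \Lambda\right]  _{S}$. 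Now substitute $\Lambda=Q+\pi\pi^{\top}-J$: the Hermitian part is $\left[  \Lambda\right]  _{H}=Q+\pi\pi^{\top}-\left[  J\right]  _{H}$ and the skew part is $\left[  \Lambda\right]  _{S}=-\left[  J\right]  _{S}$ (since $Q$ and $\pi\pi^{\top}$ are symmetric). Hence $\left[  \Lambda^{-}\right]  _{H}^{-}=Q+\pi\pi^{\top}-\left(  \left[  J\right]  _{H}-\left[  J\right]  _{S}^{\top}\left[  \Lambda\right]  _{H}^{-}\left[  J\right]  _{S}\right)$, and likewise for $\Lambda^{\prime}$.

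Finally, I would invoke the standard fact that for symmetric positive-definite matrices, $P\preceq P^{\prime}$ iff $\left(  P^{\prime}\right)  ^{-}\preceq P^{-}$ (inversion is operator-antitone). Applying this with $P=\left[  \Lambda^{-}\right]  _{H}$ and $P^{\prime}=\left[  \left(  \Lambda^{\prime}\right)  ^{-}\right]  _{H}$, condition $2)$ becomes $\left[  \left(  \Lambda^{\prime}\right)  ^{-}\right]  _{H}^{-}\preceq\left[  \Lambda^{-}\right]  _{H}^{-}$; substituting the expressions just derived, the common $Q+\pi\pi^{\top}$ cancels and the inequality flips sign on the bracketed terms, yielding exactly condition $3)$. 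The main obstacle I anticipate is not any single step but the positive-definiteness bookkeeping: I must confirm that $\left[  \Lambda^{-}\right]  _{H}$, $\left[  \left(  \Lambda^{\prime}\right)  ^{-}\right]  _{H}$, and $\left[  \Lambda\right]  _{H}$ are all genuinely positive definite (not merely invertible) so that the antitone-inversion step is licensed; if positive-definiteness fails on a subspace (e.g. the $\mathbf{1}$ direction, where $f$ constant gives zero variance), I would need to restrict all quadratic forms to the orthogonal complement of $\pi$ — equivalently work on $\{f:\pi^{\top}f=0\}$ — throughout the argument, which is harmless since adding a constant to $f$ changes neither side of condition $1)$.
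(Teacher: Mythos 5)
Your proposal is correct and follows essentially the same route as the paper: conditions 1) and 2) are matched via Kenney's formula with $Qf$ ranging over all of $\mathbb{R}^{S}$, and 2) is converted to 3) by applying Lemma~\ref{lemma-1} together with the antitonicity of matrix inversion and the identities $\left[\Lambda\right]_{H}=Q+\pi\pi^{\top}-\left[J\right]_{H}$, $\left[\Lambda\right]_{S}=-\left[J\right]_{S}$. Your worry about positive-definiteness on the constant direction is resolved exactly as the paper does it: $f^{\top}\left[\Lambda\right]_{H}f=\frac{1}{2}\mathbb{E}\left[\left(f\left(x_{t}\right)-f\left(x_{t+1}\right)\right)^{2}\right]+\mathbb{E}\left[f\left(x_{t}\right)\right]^{2}>0$ for all $f\neq0$, the $\pi\pi^{\top}$ term supplying strict positivity for constant $f$, so no restriction to a subspace is needed.
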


\begin{proof}
First we show that $\Lambda$ is invertible. Following the steps in \cite{Li
05}, for any $f\neq0$,%
\begin{align*}
f^{\top}\Lambda f  &  =f^{\top}\left[  \Lambda\right]  _{H}f=f^{\top}\left(
Q+\pi\pi^{\top}-J\right)  f\\
&  =\frac{1}{2}\mathbb{E}\left[  \left(  f\left(  x_{t}\right)  -f\left(
x_{t+1}\right)  \right)  ^{2}\right]  +\mathbb{E}\left[  f\left(
x_{t}\right)  \right]  ^{2}>0\text{,}%
\end{align*}
thus $\left[  \Lambda\right]  _{H}\succ0$, and $\Lambda$ is invertible since
$\Lambda f\neq0$ for any $f\neq0$.

Condition 1) and 2) are equivalent by definition. We now prove 2) is
equivalent to 3). By Lemma \ref{lemma-1},%
\[
\left[  \Lambda^{-}\right]  _{H}\preceq\left[  \left(  \Lambda^{\prime
}\right)  ^{-}\right]  _{H}\Longleftrightarrow\left[  \Lambda\right]
_{H}+\left[  \Lambda\right]  _{S}^{\top}\left[  \Lambda\right]  _{H}\left[
\Lambda\right]  _{S}\succeq\left[  \Lambda^{\prime}\right]  _{H}+\left[
\Lambda^{\prime}\right]  _{S}^{\top}\left[  \Lambda^{\prime}\right]
_{H}\left[  \Lambda^{\prime}\right]  _{S}\text{,}%
\]
the result follows by noticing that $\left[  \Lambda\right]  _{H}=Q+\pi
\pi^{\top}-\left[  J\right]  _{H}$ and $\left[  \Lambda\right]  _{S}=-\left[
J\right]  _{S}$.
\end{proof}

\subsection{A special case\label{special-case}}

Generally speaking, the conditions in Proposition \ref{prop-1} are very hard
to verify, particularly because of the term $\left[  J\right]  _{S}^{\top
}\left[  \Lambda\right]  _{H}^{-}\left[  J\right]  _{S}$. Here we focus on a
special case where $\left[  J^{\prime}\right]  _{S}=0$, and $\left[
J^{\prime}\right]  _{H}=J^{\prime}=\left[  J\right]  _{H}$. This amounts to
the case where the second chain is reversible, and its transition operator is
the average of the transition operator of the first chain and the associated
reverse operator. The result is formalized in the following corollary.

\begin{corollary}
\label{cor-1}Let $T$ be a reversible transition operator of a Markov chain
with stationary distribution $\pi$. Assume there is some $H$ that satisfies

\begin{description}
\item[Condition I.] $\mathbf{1}^{\top}H=0$, $H\mathbf{1}=0$, $H=-H^{\top}$,
and\footnote{We write $\mathbf{1}$ for the $S$-dimensional column vector of
$1$'s.}

\item[Condition II.] $T\pm Q^{-}H$ are valid transition matrices.
\end{description}

Denote $A=T+Q^{-}H$, $B=T-Q^{-}H$, then

\begin{description}
\item[1)] $A$ preserves $\pi$, and $B$ is the reverse operator of $A$.

\item[2)] $\sigma_{A}^{2}\left(  f\right)  =\sigma_{B}^{2}\left(  f\right)
\leq\sigma_{T}^{2}\left(  f\right)  $ for any $f$.

\item[3)] If $H\neq0$, then there is some $f$, such that $\sigma_{A}%
^{2}\left(  f\right)  <\sigma_{T}^{2}\left(  f\right)  $.

\item[4)] If $A_{\varepsilon}=T+\left(  1+\varepsilon\right)  Q^{-}H$ is valid
transition matrix, $\varepsilon>0$, then $\sigma_{A_{\varepsilon}}^{2}\left(
f\right)  \leq\sigma_{A}^{2}\left(  f\right)  $.
\end{description}
\end{corollary}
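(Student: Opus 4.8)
The plan is to verify the four claims in order, leaning on Proposition~\ref{prop-1} for parts 2)--4) and on direct computation for part 1).

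For part 1), I would first check that $A=T+Q^-H$ preserves $\pi$: since $T$ is reversible it preserves $\pi$, so it suffices to show $\pi^\top Q^- H = 0$, which is immediate because $\pi^\top Q^- = \mathbf{1}^\top$ and $\mathbf{1}^\top H = 0$ by Condition~I. Next, to show $B=T-Q^-H$ is the reverse operator of $A$, I would verify the detailed-balance-type identity $QA = (QB)^\top$, i.e.\ $Q A = B^\top Q$. Writing $J = QA = QT + H$ and using that $QT$ is symmetric (reversibility of $T$) together with $H=-H^\top$, one gets $J^\top = QT - H = QB = Q B$; and since $QB = Q T - H$ preserves the row-stochastic structure by the same argument as for $A$, $B$ is a genuine transition matrix (this is exactly where Condition~II is used). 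So $[J]_H = QT$ and $[J]_S = H$.

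For part 2), I invoke Proposition~\ref{prop-1} with $A' = T$ (so $J' = QT$, $[J']_S = 0$, $[J']_H = QT$). Condition~3) of the proposition then reduces to
\[
[J]_H - [J]_S^\top [\Lambda]_H^- [J]_S \preceq QT,
\]
i.e.\ $QT - H^\top [\Lambda]_H^- H \preceq QT$, which is equivalent to $H^\top [\Lambda]_H^- H \succeq 0$. This holds because, as shown in the proof of Proposition~\ref{prop-1}, $[\Lambda]_H \succ 0$, hence $[\Lambda]_H^- \succ 0$, hence $H^\top [\Lambda]_H^- H \succeq 0$. (Note $[\Lambda]_H = Q + \pi\pi^\top - [J]_H = Q + \pi\pi^\top - QT$ does not depend on $H$, which is what makes the comparison clean.) This also gives $\sigma_A^2 = \sigma_B^2$ since $A$ and $B$ differ only by the sign of $H$ and the bound depends on $H$ only through $H^\top[\Lambda]_H^- H$.

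For part 3), I need strictness: if $H \neq 0$ then $H^\top [\Lambda]_H^- H$ is a nonzero positive semidefinite matrix (nonzero because $[\Lambda]_H^-$ is invertible and $H\neq 0$ implies $Hg\neq0$ for some $g$), so there is some vector $v$ with $v^\top H^\top [\Lambda]_H^- H v > 0$; taking $f$ with $Qf = v$, i.e.\ $f = Q^- v$, and tracing back through Eq.~\ref{eq-50} yields $\sigma_A^2(f) < \sigma_T^2(f)$. The one subtlety is to confirm that the strict inequality in condition~3) of the proposition at a particular $v$ transfers to a strict inequality in $\sigma^2$ at the corresponding $f$; this follows because the map $f \mapsto Qf$ is a bijection and the difference $\sigma_T^2(f) - \sigma_A^2(f)$ equals $2(Qf)^\top H^\top[\Lambda]_H^-H (Qf)$ up to the terms that cancel.

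For part 4), I compare $A_\varepsilon$ (with skew part $(1+\varepsilon)H$) against $A$ (skew part $H$), again via Proposition~\ref{prop-1}; both have the same $[J]_H = QT$ and hence the same $[\Lambda]_H$, so condition~3) becomes $(1+\varepsilon)^2 H^\top[\Lambda]_H^- H \succeq H^\top[\Lambda]_H^- H$, i.e.\ $\big((1+\varepsilon)^2 - 1\big) H^\top[\Lambda]_H^- H \succeq 0$, which holds for $\varepsilon > 0$. I expect the main obstacle to be bookkeeping: carefully confirming the identities $[J]_H = QT$, $[J]_S = H$, $[\Lambda]_H = Q+\pi\pi^\top-QT$ (independent of $H$), and that $\Lambda$ for $A$ and $\Lambda'$ for $T$ share the same Hermitian part, since the entire argument collapses to a one-line semidefiniteness statement only once these are in place.
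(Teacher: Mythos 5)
Parts 1), 2) and 4) of your proposal are correct and follow essentially the same route as the paper: the identity $\pi^{\top}Q^{-}=\mathbf{1}^{\top}$ for part 1), the reduction via Proposition \ref{prop-1} to $H^{\top}[\Lambda]_{H}^{-}H\succeq 0$ for part 2) (using that $[\Lambda]_{H}=Q+\pi\pi^{\top}-QT$ is the same for $A$, $B$ and $T$), and the comparison $(1+\varepsilon)^{2}H^{\top}[\Lambda]_{H}^{-}H\succeq H^{\top}[\Lambda]_{H}^{-}H$ for part 4).

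Part 3) has a genuine gap. Writing $X=[\Lambda]_{H}$, Lemma \ref{lemma-1} gives $[\Lambda^{-}]_{H}=(X+H^{\top}X^{-}H)^{-}$, so the quantity you must make strictly positive is
\[
\sigma_{T}^{2}(f)-\sigma_{A}^{2}(f)=2\,(Qf)^{\top}\bigl(X^{-}-(X+H^{\top}X^{-}H)^{-}\bigr)(Qf)=2\,(Qf)^{\top}X^{-}H^{\top}\bigl(X+HX^{-}H^{\top}\bigr)^{-}HX^{-}(Qf)\text{,}
\]
which is \emph{not} $2(Qf)^{\top}H^{\top}X^{-}H(Qf)$ "up to terms that cancel": a strict inequality between two quadratic forms at a particular vector does not survive matrix inversion at that same vector. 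The form above vanishes precisely when $HX^{-}(Qf)=0$, i.e.\ on the subspace $X(\ker H)$, whereas your criterion $v^{\top}H^{\top}X^{-}Hv>0$ only guarantees $v\notin\ker H$. These subspaces differ in general: e.g.\ with $\ker H=\operatorname{span}\{\mathbf{1}\}$ (the generic rank-$2$ case of Proposition \ref{prop-2}), the vector $v=X\mathbf{1}$ satisfies $Hv\neq 0$ whenever $X\mathbf{1}$ is not a multiple of $\mathbf{1}$, yet $HX^{-}v=H\mathbf{1}=0$, so your $f=Q^{-}v$ would give $\sigma_{A}^{2}(f)=\sigma_{T}^{2}(f)$. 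The fix is exactly the paper's move: apply the Sherman--Morrison--Woodbury identity as displayed above and choose $Qf=Xw$ for some $w$ with $Hw\neq 0$ (the paper takes $w=He_{s^{\ast}}$, so $f=Q^{-}XHe_{s^{\ast}}$ and $HX^{-}Qf=HHe_{s^{\ast}}\neq0$ since $\|He_{s^{\ast}}\|^{2}>0$), which makes the strict positivity immediate. With that correction your argument goes through.
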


\begin{proof}
For 1), notice that $\pi^{\top}T=\pi^{\top}$, so%
\[
\pi^{\top}A=\pi^{\top}T+\pi^{\top}Q^{-}H=\pi^{\top}+\mathbf{1}^{\top}%
H=\pi^{\top}\text{,}%
\]
and similarly for $B$. Moreover%
\[
QA=QT+H=\left(  QT-H\right)  ^{\top}=\left(  Q\left(  T-Q^{-}H\right)
\right)  ^{\top}=\left(  QB\right)  ^{\top}\text{,}%
\]
thus $B$ is the reverse operator of $A$.

For 2), $\sigma_{A}^{2}\left(  f\right)  =\sigma_{B}^{2}\left(  f\right)  $
follows from Eq.\ref{eq-50}. Let $J^{\prime}=QT$, $J=QA$. Note that $\left[
J\right]  _{S}=H$,%
\[
J^{\prime}=QT=\frac{1}{2}\left(  QA+QB\right)  =\left[  QA\right]
_{H}=\left[  J\right]  _{H}\text{,}%
\]
and $\left[  \Lambda\right]  _{H}\succ0$ thus $H^{\top}\left[  \Lambda\right]
_{H}^{-}H\succeq0$ from Proposition \ref{prop-1}. It follows that $\sigma
_{A}^{2}\left(  f\right)  \leq\sigma_{T}^{2}\left(  f\right)  $ for any $f$.

For 3), write $X=\left[  \Lambda\right]  _{H}$,%
\[
\left[  \Lambda^{-}\right]  _{H}=\left(  X+H^{\top}X^{-}H\right)  ^{-}%
=X^{-}-X^{-}H^{\top}\left(  X+HX^{-}H^{\top}\right)  ^{-}HX^{-}\text{.}%
\]
Since $X\succ0$, $HX^{-}H^{\top}\succeq0$, one can write $\left(
X+HX^{-}H^{\top}\right)  ^{-}=\sum_{s=1}^{S}\lambda_{s}e_{s}e_{s}^{\top}$,
with $\lambda_{s}>0$, $\forall s$. Thus%
\[
H^{\top}\left(  X+HX^{-}H^{\top}\right)  ^{-}H=\sum\nolimits_{s=1}^{S}%
\lambda_{s}He_{s}\left(  He_{s}\right)  ^{\top}\text{.}%
\]
Since $H\neq0$, there is at least one $s^{\ast}$, such that $He_{s^{\ast}}%
\neq0$. Let $f=Q^{-}XHe_{s^{\ast}}$, then%
\begin{align*}
\frac{1}{2}\left[  \sigma_{T}^{2}\left(  f\right)  -\sigma_{A}^{2}\left(
f\right)  \right]   &  =\left(  Qf\right)  ^{\top}\left[  X^{-}-\left(
X+H^{\top}X^{-}H\right)  ^{-}\right]  \left(  Qf\right) \\
&  =\left(  Qf\right)  ^{\top}X^{-}H^{\top}\left(  X+HX^{-}H^{\top}\right)
^{-}HX^{-}\left(  Qf\right) \\
&  =\left(  He_{s^{\ast}}\right)  ^{\top}\sum\nolimits_{s=1}^{S}\lambda
_{s}He_{s}\left(  He_{s}\right)  ^{\top}\left(  He_{s^{\ast}}\right) \\
&  =\lambda_{s}\left\Vert He_{s^{\ast}}\right\Vert ^{4}+\sum\nolimits_{s\neq
s^{\ast}}\lambda_{s}\left(  e_{s^{\ast}}^{\top}H^{\top}He_{s}\right)
^{2}>0\text{.}%
\end{align*}

For 4), let $\Lambda_{\varepsilon}=Q+\pi\pi^{\top}-QA_{\varepsilon}$, then for
$\varepsilon>0$,
\[
\left[  \Lambda_{\varepsilon}^{-}\right]  _{H}=\left(  X+\left(
1+\varepsilon\right)  ^{2}H^{\top}X^{-}H\right)  ^{-}\preceq\left(  X+H^{\top
}X^{-}H\right)  ^{-}=\left[  \Lambda^{-}\right]  _{H}\text{,}%
\]
by Eq.\ref{eq-50}, we have $\sigma_{A_{\varepsilon}}^{2}\left(  f\right)
\leq\sigma_{A}^{2}\left(  f\right)  $ for any $f$.
\end{proof}

Corollary \ref{cor-1} shows that starting from a reversible Markov chain, as
long as one can find a non-zero $H$ satisfying Conditions I and II, then the
asymptotic performance of the MCMC estimator is guaranteed to improve. The
next question to ask is whether such an $H$ exists, and, if so, how to find
one. We answer this question by first looking at Condition I. The following
proposition shows that any $H$ satisfying this condition can be constructed systematically.

\begin{proposition}
\label{prop-2}Let $H$ be an $S$-by-$S$ matrix. $H$ satisfies Condition I if
and only if $H$ can be written as the linear combination of $\frac{1}%
{2}\left(  S-1\right)  \left(  S-2\right)  $ matrices, with each matrix of the
form%
\[
U_{i,j}=u_{i}u_{j}^{\top}-u_{j}u_{i}^{\top}\text{, }1\leq i<j\leq S-1\text{.}%
\]
Here $u_{1},\cdots,u_{S-1}$ are $S-1$ non-zero linearly independent vectors
satisfying $u_{s}^{\top}\mathbf{1}=0$.
\end{proposition}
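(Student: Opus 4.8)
The plan is to characterize the space of matrices satisfying Condition~I directly as a linear subspace of $\mathbb{R}^{S\times S}$ and then exhibit the $U_{i,j}$ as a basis. First I would observe that Condition~I asks for $H$ to be skew-symmetric ($H=-H^{\top}$) and to annihilate $\mathbf{1}$ on both sides; because $H$ is skew-symmetric, $H\mathbf{1}=0$ and $\mathbf{1}^{\top}H=0$ are equivalent, so really there is one linear constraint, $H\mathbf{1}=0$. Thus the set of admissible $H$ is exactly the space of skew-symmetric operators on $\mathbb{R}^S$ that vanish on $\mathbf{1}$, equivalently skew-symmetric operators supported on the hyperplane $V=\{v:\mathbf{1}^{\top}v=0\}$. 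I would then fix the basis $u_1,\dots,u_{S-1}$ of $V$ from the statement and note that a skew-symmetric form on the $(S-1)$-dimensional space $V$ is determined by its values on pairs of basis vectors; the space of skew-symmetric bilinear forms on an $(S-1)$-dimensional space has dimension $\binom{S-1}{2}=\tfrac12(S-1)(S-2)$, which matches the claimed count.

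Next I would make the correspondence explicit. For $1\le i<j\le S-1$, set $U_{i,j}=u_i u_j^{\top}-u_j u_i^{\top}$; each $U_{i,j}$ is manifestly skew-symmetric, and since $u_i^{\top}\mathbf{1}=u_j^{\top}\mathbf{1}=0$ we get $U_{i,j}\mathbf{1}=u_i(u_j^{\top}\mathbf{1})-u_j(u_i^{\top}\mathbf{1})=0$, so every $U_{i,j}$ satisfies Condition~I, and hence so does every linear combination. For the converse, suppose $H$ satisfies Condition~I. Then $\operatorname{range}(H)\subseteq V$ (from $\mathbf{1}^{\top}H=0$) and $\ker(H)\supseteq\operatorname{span}\{\mathbf{1}\}$ (from $H\mathbf{1}=0$), so $H$ restricts to a skew-symmetric operator on $V$. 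Writing $H\,u_j=\sum_{i} h_{ij}u_i$ for coefficients $h_{ij}$ (possible since the $u_i$ span $V$), skew-symmetry of $H$ forces the coefficient array to be skew-symmetric, $h_{ij}=-h_{ji}$; I would verify this by pairing with a dual basis, or more cleanly by checking that $H=\sum_{i<j} c_{ij}U_{i,j}$ for a suitable choice of the $c_{ij}$ reproduces the action of $H$ on each $u_j$ and on $\mathbf{1}$, and that two operators in this subspace agreeing on the spanning set $\{u_1,\dots,u_{S-1},\mathbf{1}\}$ of $\mathbb{R}^S$ are equal.

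Finally I would confirm linear independence of $\{U_{i,j}\}_{i<j}$ so that the count $\tfrac12(S-1)(S-2)$ is sharp: if $\sum_{i<j}c_{ij}U_{i,j}=0$, apply both sides to $u_k$ and use linear independence of $u_1,\dots,u_{S-1}$ to peel off the coefficients one pair at a time, concluding all $c_{ij}=0$. Combining with the surjectivity argument of the previous paragraph gives that $\{U_{i,j}\}$ is a basis, which is the assertion. The only mildly delicate point is the converse direction — turning the abstract statement ``$H$ restricts to a skew form on $V$'' into the concrete expansion in the $U_{i,j}$ — but since the $u_i$ are a basis of $V$ this is just the standard identification of a skew-symmetric operator with its skew-symmetric matrix in a chosen basis, so no real obstacle arises; the argument is essentially a dimension count plus bookkeeping.
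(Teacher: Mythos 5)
Your overall strategy is the same as the paper's: show that the space of matrices satisfying Condition I has dimension $\tfrac12(S-1)(S-2)$, check that each $U_{i,j}$ lies in that space, and then prove the $\binom{S-1}{2}$ matrices $U_{i,j}$ are linearly independent, so they form a basis. The framing differs only cosmetically (you identify the space with skew-symmetric operators on the hyperplane $V=\{v:\mathbf{1}^{\top}v=0\}$, the paper counts $\tfrac12 S(S-1)$ free parameters minus $S-1$ constraints), and your sufficiency and dimension arguments are fine.

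One step does not hold as you stated it: if the $u_i$ are merely linearly independent (not orthogonal), then writing $Hu_j=\sum_i h_{ij}u_i$, skew-symmetry of $H$ does \emph{not} force $h_{ij}=-h_{ji}$; pairing with $u_k$ gives $\hat h^{\top}G=-G\hat h$ where $G$ is the Gram matrix, i.e.\ the coefficient array is skew-adjoint with respect to $G$, not skew-symmetric as an array. The same issue affects your linear independence step: applying $\sum_{i<j}c_{ij}U_{i,j}=0$ to $u_k$ does not ``peel off'' individual coefficients unless $u_i^{\top}u_k=0$ for $i\neq k$; in general it yields $\tilde C G=0$ for the skew-symmetric extension $\tilde C$ of the coefficients, and you must invoke invertibility of $G$ to conclude $\tilde C=0$. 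These are repairable (you half-anticipate the first by offering the dimension-count route instead), but they are exactly the point where the paper is more careful: it proves independence first for orthogonal $u_i$ and then reduces the general case to that one via Gram--Schmidt, noting each $\tilde U_{i,j}$ is a linear combination of the $U_{i,j}$. With either fix (Gram matrix invertibility or the Gram--Schmidt reduction) your argument closes.
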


\begin{proof}
Sufficiency. It is straightforward to verify that each $U_{i,j}$ is
skew-Hermitian and satisfies $U_{i,j}\mathbf{1}=0$. Such properties are
inherited by any linear combination of $U_{i,j}$.

Necessity. We show that there are at most $\frac{1}{2}\left(  S-1\right)
\left(  S-2\right)  $ linearly independent bases for all $H$ such that
$H=-H^{\top}$ and $H\mathbf{1}=0$. On one hand, any $S$-by-$S$ skew-Hermitian
matrix can be written as the linear combination of $\frac{1}{2}S\left(
S-1\right)  $ matrices of the form%
\[
V_{i,j}:\left\{  V_{i,j}\right\}  _{m,n}=\delta\left(  m,i\right)
\delta\left(  n,j\right)  -\delta\left(  n,i\right)  \delta\left(  m,j\right)
\text{,}%
\]
where $\delta$ is the standard delta function such that $\delta\left(
i,j\right)  =1$ if $i=j$ and $0$ otherwise. However, the constraint
$H\mathbf{1}=0$ imposes $S-1$ linearly independent constraints, which means
that out of $\frac{1}{2}S\left(  S-1\right)  $ parameters, only%
\[
\frac{1}{2}S\left(  S-1\right)  -\left(  S-1\right)  =\frac{1}{2}\left(
S-1\right)  \left(  S-2\right)
\]
are independent.

On the other hand, selecting two non-identical vectors from $u_{1}%
,\cdots,u_{S-1}$ results in $\dbinom{S-1}{2}=\frac{1}{2}\left(  S-1\right)
\left(  S-2\right)  $ different $U_{i,j}$. It has still to be shown that these
$U_{i,j}$ are linearly independent.

Assume%
\[
0=\sum_{1\leq i<j\leq S-1}%
\!\!\!\!\!%
\kappa_{i,j}U_{i,j}=\sum_{1\leq i<j\leq S-1}%
\!\!\!\!\!%
\kappa_{i,j}\left(  u_{i}u_{j}^{\top}-u_{j}u_{i}^{\top}\right)  \text{,
}\forall\kappa_{i,j}\in%
\mathbb{R}
\text{.}%
\]
Consider two cases: Firstly, assume $u_{1},\cdots,u_{S-1}$ are orthogonal,
i.e., $u_{i}^{\top}u_{j}=0$ for $i\neq j$. For a particular $u_{s}$,%
\begin{align*}
0  &  =\sum_{1\leq i<j\leq S-1}%
\!\!\!\!\!%
\kappa_{i,j}U_{i,j}u_{s}=\sum_{1\leq i<j\leq S-1}%
\!\!\!\!\!%
\kappa_{i,j}\left(  u_{i}u_{j}^{\top}-u_{j}u_{i}^{\top}\right)  u_{s}\\
&  =\sum_{1\leq i<s}\kappa_{i,s}u_{i}\left\Vert u_{s}^{\top}u_{s}\right\Vert
+\sum_{s<j\leq S-1}%
\!\!\!%
\kappa_{s,j}u_{j}\left\Vert u_{s}^{\top}u_{s}\right\Vert \text{.}%
\end{align*}
Since $\left\Vert u_{s}^{\top}u_{s}\right\Vert \neq0$, it follows that
$\kappa_{i,s}=\kappa_{s,j}=0$, for all $1\leq i<s<j\leq S-1$. This holds for
any $u_{s}$, so all $\kappa_{i,j}$ must be $0$, and therefore $U_{i,j}$ are
linearly independent by definition. Secondly, if $u_{1},\cdots,u_{S-1}$ are
not orthogonal, one can construct a new set of orthogonal vectors $\tilde
{u}_{1},\cdots,\tilde{u}_{S-1}$ from $u_{1},\cdots,u_{S-1}$ through
Gram--Schmidt orthogonalization, and create a different set of bases
$\tilde{U}_{i,j}$. It is easy to verify that each $\tilde{U}_{i,j}$ is a
linear combination of $U_{i,j}$. Since all $\tilde{U}_{i,j}$ are linearly
independent, it follows that $U_{i,j}$ must also be linearly independent.
\end{proof}

Proposition \ref{prop-2} confirms the existence of non-zero $H$ satisfying
Condition I. We now move to Condition II, which requires that both $QT+H$ and
$QT-H$ remain valid joint distribution matrices, i.e. all entries must be
non-negative and sum up to $1$. Since $\mathbf{1}^{\top}\left(  QT+H\right)
\mathbf{1}=1$ by Condition I, only the non-negative constraint needs to be considered.

It turns out that not all reversible Markov chains admit a non-zero $H$
satisfying both Condition I and II. For example, consider a Markov chain with
only two states. It is impossible to find a non-zero skew-Hermitian $H$ such
that $H\mathbf{1}=0$, because all 2-by-2 skew-Hermitian matrices are
proportional to ${\scriptstyle\left[
\begin{array}
[c]{rr}%
{0} & {-1}\\
{1} & {0}%
\end{array}
\right]  }$.

The next proposition gives the sufficient and necessary condition for the
existence of a non-zero $H$ satisfying both I and II. In particular, it shows
an interesting link between the existence of such $H$ and the connectivity of
the states in the reversible chain.

\begin{proposition}
\label{prop-3}Assume a reversible ergodic Markov chain with transition matrix
$T$ and let $J=QT$. The \emph{state transition graph} $\mathcal{G}_{T}$ is
defined as the undirected graph with node set $\mathcal{S}=\left\{
1,\cdots,S\right\}  $ and edge set $\left\{  \left(  i,j\right)
:J_{i,j}>0,1\leq i<j\leq S\right\}  $. Then there exists some non-zero $H$
satisfying Condition I and II, if and only if there is a loop in
$\mathcal{G}_{T}$.
\end{proposition}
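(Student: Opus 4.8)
The plan is to prove both directions by relating the skew-Hermitian, zero-row-sum matrices $H$ to cycles in $\mathcal{G}_T$. For the "only if" direction, suppose $\mathcal{G}_T$ has no loop, so it is a forest. I would argue that any $H$ satisfying Condition I must be supported only on edges of $\mathcal{G}_T$: if $H_{i,j}\neq 0$ for some pair with $J_{i,j}=0$, then $H_{i,j}>0$ or $H_{i,j}<0$, and since $(QT\pm H)_{i,j}=\pm H_{i,j}$ must both be non-negative, we force $H_{i,j}=0$ --- so Condition II already restricts the support of $H$ to the edge set of $\mathcal{G}_T$. Then I would show that a skew-Hermitian matrix supported on the edges of a forest with all row sums zero must vanish: proceed by induction on the number of edges, peeling off a leaf node $v$; the single edge $(v,w)$ at $v$ gives row-sum condition $H_{v,w}=0$, hence that edge carries no weight, and we delete $v$ and repeat. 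This yields $H=0$, contradiction.

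For the "if" direction, assume $\mathcal{G}_T$ contains a loop, i.e. a cycle $v_1 - v_2 - \cdots - v_k - v_1$ with $k\geq 3$ and every consecutive $J_{v_\ell,v_{\ell+1}}>0$. I would construct $H$ explicitly as the "circulation" around this cycle: set $H_{v_\ell,v_{\ell+1}} = \delta$, $H_{v_{\ell+1},v_\ell}=-\delta$ for $\ell=1,\dots,k$ (indices mod $k$), and all other entries zero. This $H$ is manifestly skew-Hermitian, and at each node $v_\ell$ the row sum is $\delta + (-\delta) = 0$ (one outgoing-cycle edge contributes $+\delta$, one contributes $-\delta$), so $\mathbf{1}^\top H = H\mathbf{1}=0$; hence Condition I holds and $H\neq 0$. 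For Condition II, note that the only entries of $QT\pm H$ that differ from $QT$ are the $2k$ entries on the cycle edges, each of which is $J_{v_\ell,v_{\ell+1}}\pm\delta$ or $J_{v_{\ell+1},v_\ell}\mp\delta$; since all these $J$ entries are strictly positive, choosing $\delta>0$ small enough (smaller than the minimum of these finitely many positive values) keeps every entry of both $QT+H$ and $QT-H$ non-negative. Thus both are valid joint-distribution matrices, and $T\pm Q^{-}H$ are valid transition matrices.

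The main obstacle is the "only if" direction --- specifically, making rigorous the claim that Condition I plus the support restriction from Condition II forces $H=0$ on a forest. The support restriction itself is easy, but one must be careful that the induction argument (peeling leaves) correctly uses the row-sum condition: at a leaf $v$, the only possibly-nonzero entry in row $v$ of $H$ is $H_{v,w}$ for the unique neighbor $w$, so $\sum_j H_{v,j}=H_{v,w}=0$; then by skew-symmetry $H_{w,v}=0$ too, so the edge $(v,w)$ is effectively removed and $w$'s row sum is unchanged, allowing the induction to continue on the forest with $v$ deleted. One should also handle the degenerate case where a tree component is a single edge or isolated vertex (trivial), and note that disconnectedness of $\mathcal{G}_T$ is not an issue since ergodicity forces $\mathcal{G}_T$ to be connected anyway --- a connected graph with no loop is a tree, which has $S-1$ edges, consistent with the $\frac12(S-1)(S-2)$-dimensional space of Proposition \ref{prop-2} collapsing when the only available edges form a tree.

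Finally, I would remark that this construction dovetails with the graphical interpretation in Section \ref{graphical}: the matrix $H$ built from a cycle is precisely a "vortex" inserted along that loop of the state transition graph, which is why the theorem is phrased in terms of loops.
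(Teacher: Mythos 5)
Your proposal is correct and follows essentially the same route as the paper: sufficiency by constructing an explicit circulation of small magnitude $\delta$ around the given cycle (the paper's $\varepsilon$ is chosen the same way, up to an inessential extra $1-J$ term in the minimum), and necessity by observing that Condition II confines the support of $H$ to the edges of the loop-free (hence tree, by ergodicity) graph and then peeling leaves with the row-sum condition to force $H=0$. No gaps worth noting.
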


\begin{proof}
Sufficiency: Without loss of generality, assume the loop is made of states
$1,2,\cdots,N$ and edges $\left(  1,2\right)  ,\cdots,\left(  N-1,N\right)
,\left(  N,1\right)  $, with $N\geq3$. By definition, $J_{1,N}>0$, and
$J_{n,n+1}>0$ for all $1\leq n\leq N-1$. A non-zero $H$ can then be
constructed as%
\[
H_{i,j}=\left\{
\begin{array}
[c]{rl}%
\varepsilon\text{,} & \text{ if }1\leq i\leq N-1\text{ and }j=i+1\text{, }\\
-\varepsilon\text{,} & \text{ if }2\leq i\leq N\text{ and }j=i-1\text{, }\\
\varepsilon\text{,} & \text{ if }i=N\text{ and }j=1\text{,}\\
-\varepsilon\text{,} & \text{ if }i=1\text{ and }j=N\text{,}\\
0\text{,} & \text{ otherwise.}%
\end{array}
\right.
\]
Here
\[
\varepsilon=\min_{1\leq n\leq N-1}\left\{  J_{n,n+1}\text{, }1-J_{n,n+1}%
\text{, }J_{1,N}\text{, }1-J_{1,N}\right\}  \text{.}%
\]
Clearly, $\varepsilon>0$, since all the items in the minimum are above $0$. It
is trivial to verify that $H=-H^{\top}$ and $H\mathbf{1}=0$.

Necessity: Assume there are no loops in $\mathcal{G}_{T}$, then all states in
the chain must be organized in a tree, following the ergodic assumption. In
other word, there are exactly $2\left(  S-1\right)  $ non-zero off-diagonal
elements in $J$. Plus, these $2\left(  S-1\right)  $ elements are arranged
symmetrically along the diagonal and spanning every column and row of $J$.

Because the states are organized in a tree, there is at least one leaf node
$s$ in $\mathcal{G}_{T}$, with a single neighbor $s^{\prime}$. Row $s$ and
column $s$ in $J$ thus looks like $r_{s}=\left[  \cdots,p_{s,s},\cdots
,p_{s,s^{\prime}},\cdots\right]  $ and its transpose, respectively, with
$p_{s,s}\geq0$ and $p_{s,s^{\prime}}>0$, and all other entries being $0$.

Assume that one wants to construct a some $H$, such that $J\pm H\geq0$. Let
$h_{s}$ be the $s$-th row of $H$. Since $r_{s}\pm h_{s}\geq0$, all except the
$s^{\prime}$-th elements in $h_{s}$ must be $0$. But since $h_{s}\mathbf{1}%
=0$, the whole $s$-th row, thus the $s$-th column of $H$ must be $0$.

Having set the $s$-th column and row of $H$ to $0$, one can consider the
reduced Markov chain with one state less, and repeat with another leaf node.
Working progressively along the tree, it follows that all rows and columns in
$H$ must be $0$.
\end{proof}

The indication of Proposition \ref{prop-3} together with \ref{prop-2} is that
all reversible chains can be improved in terms of asymptotic variance using
Corollary \ref{cor-1}, except those whose transition graphs are trees. In
practice, the non-tree constraint is not a problem because almost all current
methods of constructing reversible chains generate chains with loops.

\subsection{Graphical interpretation\label{graphical}}

In this subsection we provide a graphical interpretation of the results in the
previous sections. Starting from a simple case, consider a reversible Markov
chain with three states forming a loop. Let $u_{1}=\left[  1,0,-1\right]
^{\top}$ and $u_{2}=\left[  0,1,-1\right]  ^{\top}$. Clearly, $u_{1}$ and
$u_{2}$ are linearly independent and $u_{1}^{\top}\mathbf{1}=u_{2}^{\top
}\mathbf{1}=0$. By Proposition \ref{prop-2} and \ref{prop-3}, there exists
some $\varepsilon>0$, such that $H=\varepsilon U_{12}$ satisfies Condition I
and II, with $U_{1,2}=u_{1}u_{2}^{\top}-u_{2}u_{1}^{\top}$. Write $U_{1,2}$
and $J+H$ in explicit form,
\[
U_{1,2}=\left[
\begin{array}
[c]{rrr}%
0 & 1 & -1\\
-1 & 0 & 1\\
1 & -1 & 0
\end{array}
\right]  \text{, \ \ }J+H=\left[
\begin{array}
[c]{ccc}%
p_{1,1} & p_{1,2}+\varepsilon & p_{1,3}-\varepsilon\\
p_{2,1}-\varepsilon & p_{2,2} & p_{2,3}+\varepsilon\\
p_{3,1}+\varepsilon & p_{3,2}-\varepsilon & p_{3,3}%
\end{array}
\right]  \text{,}%
\]
with $p_{i,j}$ being the probability of the consecutive states being $i$, $j$.
It is clear that in $J+H$, the probability of jumps $1\rightarrow2$,
$2\rightarrow3$, and $3\rightarrow1$ is increased, and the probability of
jumps in the opposite direction is decreased. Intuitively, this amounts to
adding a `vortex' of direction $1\rightarrow2\rightarrow3\rightarrow1$ in the
state transition. Similarly, the joint probability matrix for the reverse
operator is $J-H$, which adds a vortex in the opposite direction. This simple
case also gives an explanation of why adding or subtracting non-zero $H$ can
only be done where a loop already exists, since the operation requires
subtracting $\varepsilon$ from all entries in $J$ corresponding to edges in
the loop.

In the general case, define $S-1$ vectors $u_{1},\cdots,u_{S-1}$ as%
\[
u_{s}=[0,\cdots,0,\underset{s\text{-th element}}{1},0,\cdots,0,-1]^{\top
}\text{.}%
\]
It is straightforward to see that $u_{1},\cdots,u_{S-1}$ are linearly
independent and $u_{s}^{\top}\mathbf{1}=0$ for all $s$, thus any $H$
satisfying Condition I can be represented as the linear combination of
$U_{i,j}=u_{i}u_{j}^{\top}-u_{j}u_{i}^{\top}$, with each $U_{i,j}$ containing
$1$'s at positions $\left(  i,j\right)  $, $\left(  j,S\right)  $, $\left(
S,i\right)  $, and $-1$'s at positions $\left(  i,S\right)  $, $\left(
S,j\right)  $, $\left(  j,i\right)  $. It is easy to verify that adding
$\varepsilon U_{i,j}$ to $J$ amounts to introducing a vortex of direction
$i\rightarrow j\rightarrow S\rightarrow i$, and any vortex of $N$ states
($N\geq3$) $s_{1}\rightarrow s_{2}\rightarrow\cdots\rightarrow s_{N}%
\rightarrow s_{1}$ can be represented by the linear combination $\sum
_{n=1}^{N-1}U_{s_{n},s_{n+1}}$ in the case of state $S$ being in the vortex
and assuming $s_{N}=S$ without loss of generality, or $U_{s_{N},s_{1}}%
+\sum_{n=1}^{N-1}U_{s_{n},s_{n+1}}$ if $S$ is not in the vortex, as
demonstrated in Figure~\ref{Fig.1}. Therefore, adding or subtracting an $H$ to
$J$ is equivalent to inserting a number of vortices into the state transition map.

\begin{figure}[t]
\begin{center}
\includegraphics[width=5.5in]{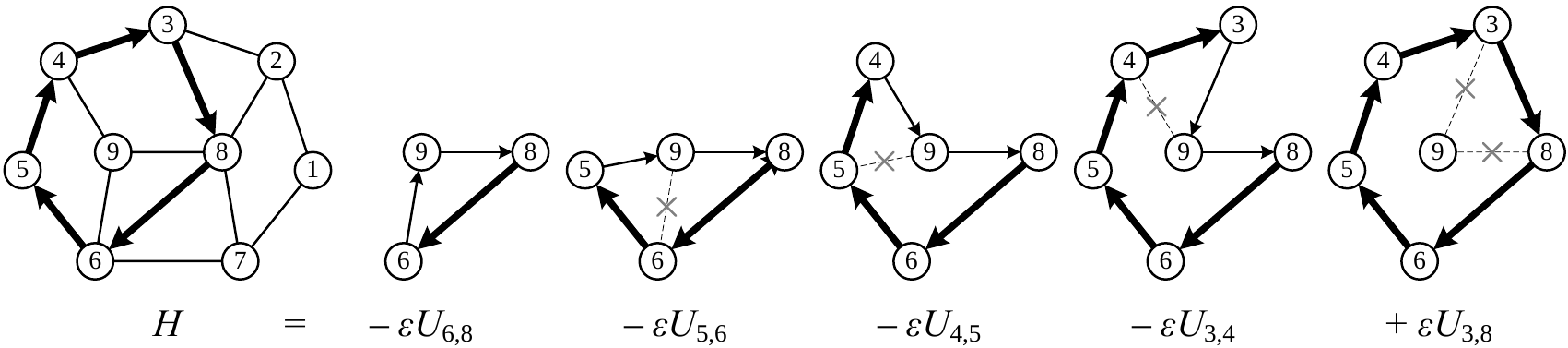}
\end{center}
\caption{Illustration of the construction of larger vortices. The left hand
side is a state transition graph of a reversible Markov chain with $S=9$
states, with a vortex $3\rightarrow8\rightarrow6\rightarrow5\rightarrow4$ of
strength $\varepsilon$ inserted. The corresponding $H$ can be expressed as the
linear combination of $U_{i,j}$, as shown on the right hand side of the graph.
We start from the vortex $8\rightarrow6\rightarrow9\rightarrow8$, and add one
vortex a time. The dotted lines correspond to edges on which the flows cancel
out when a new vortex is added. For example, when vortex $6\rightarrow
5\rightarrow9\rightarrow6$ is added, edge $9\rightarrow6$ cancels edge
$6\rightarrow9$ in the previous vortex, resulting in a larger vortex with four
states. Note that in this way one can construct vortices which do not include
state $9$, although each $U_{i,j}$ is a vortex involving $9$.}%
\label{Fig.1}%
\end{figure}

\section{An example}

Adding vortices to the state transition graph forces the Markov chain to move
in loops following pre-specified directions. The benefit of this can be
illustrated in the following example. Consider a reversible Markov chain with
$S$ states forming a ring, namely from state $s$ one can only jump to
$s\oplus1$ or $s\ominus1$, with $\oplus$ and $\ominus$ being the mod-$S$
summation and subtraction. The only possible non-zero $H$ in this example is
of form $\varepsilon\sum_{s=1}^{S-1}U_{s,s+1}$, corresponding to vortices on
the large ring.

We assume uniform stationary distribution $\pi\left(  s\right)  =\frac{1}{S}$.
In this case, any reversible chain behaves like a random walk. The chain which
achieves minimal asymptotic variance is the one with the probability of both
jumping forward and backward being $\frac{1}{2}$. The expected number of steps
for this chain to reach the state $\frac{S}{2}$ edges away is $\frac{S^{2}}%
{4}$. However, adding the vortex reduces this number to roughly $\frac
{S}{2\varepsilon}$ for large $S$, suggesting that it is much easier for the
non-reversible chain to reach faraway states, especially for large $S$. In the
extreme case, when $\varepsilon=\frac{1}{2}$, the chain cycles
deterministically, reducing asymptotic variance to zero. Also note that the
reversible chain here has zero probability of staying in the current state,
thus cannot be further improved using Peskun's theorem.

Our intuition about why adding vortices helps is that chains with vortices
move faster than the reversible ones, making the function values of the
trajectories less correlated. This effect is demonstrated in
Figure~\ref{Fig.2}.

\begin{figure}[t]
\begin{center}
\includegraphics[width=5.5in]{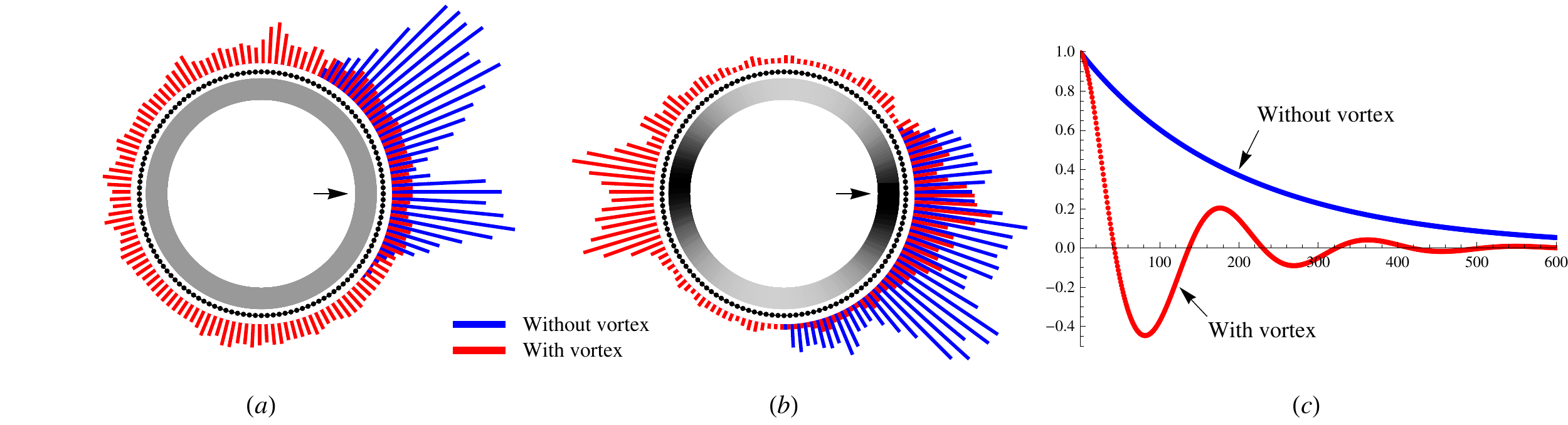}
\end{center}
\caption{Demonstration of the vortex effect: (a) and (b) show two different,
reversible Markov chains, each containing $128$ states connected in a ring.
The equilibrium distribution of the chains is depicted by the gray inner
circles; darker shades correspond to higher probability. The equilibrium
distribution of chain (a) is uniform, while that of (b) contains two peaks
half a ring apart. In addition, the chains are constructed such that the
probability of staying in the same state is zero. In each case, two
trajectories, of length $1000$, are generated from the chain with and without
the vortex, starting from the state pointed to by the arrow. The length of the
bar radiating out from a given state represents the relative frequency of
visits to that state, with red and blue bars corresponding to chains with and
without vortex, respectively. It is clear from the graph that trajectories
sampled from reversible chains spread much slower, with only $1/5$ of the
states reached in (a) and $1/3$ in (b), and the trajectory in (b) does not
escape from the current peak. On the other hand, with vortices added,
trajectories of the same length spread over all the states, and effectively
explore both peaks of the stationary distribution in (b). The plot (c) show
the correlation of function values (normalized by variance) between two states
$\tau$ time steps apart, with $\tau$ ranging from $1$ to $600$. Here we take
the Markov chains from (b) and use function $f\left(  s\right)  =\cos\left(
4\pi\cdot\frac{s}{128}\right)  $. When vortices are added, not only do the
absolute values of the correlations go down significantly, but also their
signs alternate, indicating that these correlations tend to cancel out in the
sum of Eq.\ref{eq-50}.}%
\label{Fig.2}%
\end{figure}

\section{Conclusion}



In this paper, we have presented a new way of converting a reversible finite
Markov chain into a non-reversible one, with the theoretical guarantee that
the asymptotic variance of the MCMC estimator based on the non-reversible
chain is reduced. The method is applicable to any reversible chain whose
states are not connected through a tree, and can be interpreted graphically as
inserting vortices into the state transition graph.

The results confirm that non-reversible chains are fundamentally better than
reversible ones. The general framework of Proposition \ref{prop-1} suggests
further improvements of MCMC's asymptotic performance, by applying other
results from matrix analysis to asymptotic variance reduction. The combined
results of Corollary \ref{cor-1}, and Propositions \ref{prop-2} and
\ref{prop-3}, provide a specific way of doing so, and pose interesting
research questions. Which combinations of vortices yield optimal improvements
for a given chain? Finding one of them is a combinatorial optimization
problem. How can a good combination be constructed in practice, using limited
history and computational resources?

\newpage

\subsubsection*{References}

\renewcommand*{\refname}{\vspace*{-10mm}}

\end{document}